\newif\iffull\fulltrue
\let\doendproof\endproof
\renewcommand\endproof{~\hfill\qed\doendproof}
\title{Triangle-Free Penny Graphs:\\ Degeneracy, Choosability, and Edge Count}
\author{David Eppstein\thanks{Supported in part by the National Science Foundation under Grants CCF-1228639, CCF-1618301, and CCF-1616248.}}
\institute{Department of Computer Science, University of California, Irvine}
\begin{document}
\maketitle

\begin{abstract}
We show that triangle-free penny graphs have degeneracy at most two, list coloring number (choosability) at most three, diameter $D=\Omega(\sqrt n)$, and at most $\min\bigl(2n-\Omega(\sqrt n),2n-D-2\bigr)$ edges.
\end{abstract}

\section{Introduction}

Penny graphs are the contact graphs of unit circles~\cite{HliKra-DM-01,PisRan-GaW-00} --- they are formed from non-overlapping sets of unit circles by creating a vertex for each circle and an edge for each tangency between two circles --- and as such, fit into a long line of graph drawing research on contact graphs of geometric objects~\cite{FraOssRos-CPC-94,BucGanPro-TALG-08,KlaNolUec-GD-15,Hli-DM-01,AlaEppKau-WADS-15}. The same graphs (except the graph with no edges) are also proximity graphs, the graphs determined from a finite set of points in the plane by adding edges between all closest pairs of points, and for this reason they are also called minimum-distance graphs~\cite{Csi-DCG-98,Swa-Geombinatorics-09}. A minimum-distance representation can be obtained from a contact representation by choosing a point at the center of each circle, and a contact representation can be obtained from a minimum-distance representation by scaling the points so their minimum distance is two and using each point as the center of a unit circle. However, finding either type of representation given only the graph is NP-hard, even for trees~\cite{BowDurLof-GD-15}.

As graph drawings, minimum distance representations are in many ways ideal: they have no crossings, all edges have unit length, and the angular resolution is at least $\pi/3$. Every graph that can be drawn with this combination of properties is a penny graph.
Moreover, penny graphs have \emph{degeneracy} at most three, where the degeneracy of a graph $G$ is the minimum number $d$ such that every subgraph of $G$ contains a vertex of at most $d$.
Equivalently, the vertices of any penny graph can be ordered so each vertex has at most three neighbors later than it in the ordering. This ordering leads to a linear-time greedy 4-coloring algorithm~\cite{HarRin-Pearl-8.4.8}, much simpler than known quadratic-time 4-coloring algorithms for arbitrary planar graphs~\cite{RobSanSey-STOC-96}. Additionally, although planar graphs with $n$ vertices can have $3n-6$ edges, penny graphs have at most  $\bigl\lfloor 3n-\sqrt{12n-3}\bigr\rfloor$ edges~\cite{Har-EdM-74}.
This bound is tight for pennies tightly packed into a hexagon~\cite{Kup-IG-94}, and its lower-order square-root term stands in an intriguing contrast to many similar bounds on the edge numbers of planar graphs, $k$-planar graphs, quasi-planar graphs, and minor-closed graph families, with constant or unknown lower-order terms~\cite{AckTar-JCTA-07,AgaAroPac-Comb-97,BraEppGle-GD-12,Epp-EJC-10,PacTot-Comb-97,SukWal-CGTA-15}.

Swanepoel~\cite{Swa-Geombinatorics-09} first considered corresponding problems for the \emph{triangle-free} penny graphs. In graph drawing terms, these are the graphs that can be drawn with no crossings, unit-length edges, and angular resolution strictly larger than $\pi/3$. Swanepoel observed that, as with triangle-free planar graphs more generally, an $n$-vertex triangle-free penny graph can have at most $2n-4$ edges.
As a lower bound, the square grids have
$\bigl\lfloor 2n-2\sqrt{n}\bigr\rfloor$
edges, as do some subsets of grids and some pentagonally-symmetric graphs found by Oloff de Wet~\cite{Swa-Geombinatorics-09}. Swanepoel conjectured that, of the two bounds, it is the lower bound that is tight.

Triangle-free planar graphs more generally have also been considered. Gr\"otzsch proved that these graphs are 3-colorable~\cite{Gro-WZMLU-59,Tho-JCTB-03} and they can be 3-colored in linear time~\cite{DvoKawTho-SODA-09}. However, not every triangle-free planar graph is  3-list-colorable: if each vertex is given a list of three colors, it is not always possible to assign each vertex a color from its list that differs from all its neighbors' assigned colors~\cite{Voi-DM-95}. 
3-list-colorability is known for bipartite planar graphs~\cite{AloTar-Comb-92}, planar graphs with girth at least five~\cite{Tho-JCTB-03}, and
planar graphs of girth four with well-separated 4-cycles~\cite{DvoLidSkr-SIDMA-10}, but these subclasses do not include all triangle-free penny graphs.

We continue these lines of research with the following new results.
\begin{itemize}
\item Every triangle-free penny graph with at least one cycle has at least four vertices of degree two or less. Consequently, the triangle-free penny graphs have degeneracy at most two.
\item Every triangle-free penny graph has list chromatic number (choosability) at most three,
and any list-coloring problem on a triangle-free penny graph with three colors per vertex can be solved in linear time.
\item Every $n$-vertex triangle-free penny graph has at most $2n-\Omega(\sqrt n)$ edges.
Thus, the form of Swanepoel's conjectured edge bound is correct, although we cannot confirm the conjectured constant factor on the square-root term.
\item Every penny graph has graph-theoretic diameter $\Omega(\sqrt{n})$, and every triangle-free penny graph with $n$ vertices and diameter $D$ has at most $2n-D-2$ edges. The combination of these two results provides an alternative proof of the $2n-\Omega(\sqrt{n})$ edge bound, but with a worse constant factor in the $\Omega$.
\end{itemize}

\section{Degeneracy}

We begin by showing that every triangle-free penny graph with at least one cycle has at least four vertices of degree two or less. It is convenient to begin with a special case of these graphs, the ones that are biconnected.

\begin{figure}[t]
\centering\includegraphics[width=3in]{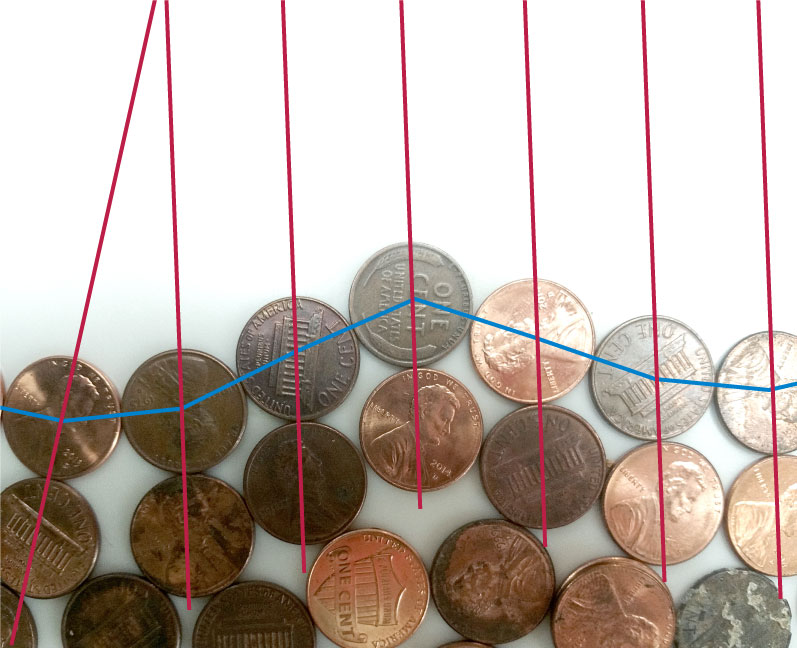}
\caption{Rays $R_v$ extending from the center of each boundary vertex directly away from the clockwise neighbor of its clockwise boundary neighbor, used in the proof of \autoref{lem:biconn-d2}.}
\label{fig:rays}
\end{figure}

\begin{lemma}
\label{lem:biconn-d2}
Every biconnected triangle-free penny graph has at least four vertices of degree two.
\end{lemma}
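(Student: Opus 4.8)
The plan is to reduce the statement to an angle-counting argument on the outer boundary. Since $G$ is biconnected, its planar embedding has an outer face bounded by a simple cycle; I will call its vertices the \emph{boundary vertices}, listed $v_1,\dots,v_k$ in clockwise order, so that the boundary edges trace out a simple (possibly nonconvex) polygon whose sides are the straight center-to-center segments of the representation. Two elementary geometric facts drive everything. First, any two edges incident to a common vertex span an angle strictly greater than $\pi/3$: their far endpoints lie at distance exactly $2$ from the shared center, and an angle of at most $\pi/3$ would place them at distance at most $2$ from each other, contradicting either the packing constraint or triangle-freeness. Consequently every vertex has degree at most $5$, and at a boundary vertex of degree $d$ the $d-1$ angular gaps facing the interior each exceed $\pi/3$, so the interior angle $\alpha_v$ of the boundary polygon satisfies $\alpha_v > (d-1)\pi/3$.

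Next I would run a turning-number argument around the boundary. To each boundary vertex $v$ I assign the outward ray $R_v$ of \autoref{fig:rays}, whose direction is determined by $v$ together with a neighbor of its clockwise boundary neighbor. As $v$ ranges over the boundary clockwise, the direction of $R_v$ should wind exactly once, so that the signed turns $\tau_v$ of consecutive rays sum to $2\pi$. The goal is then the per-vertex bound: every degree-$2$ boundary vertex has $\tau_v \le \pi/2$, while every boundary vertex of degree at least $3$ has $\tau_v \le 0$. Granting this, $2\pi = \sum_v \tau_v \le \tfrac{\pi}{2}\,n_2$, where $n_2$ is the number of degree-$2$ vertices, and therefore $n_2 \ge 4$, as claimed. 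The square grid, whose four corners are degree-$2$ vertices each contributing exactly $\pi/2$ while every other boundary vertex contributes $0$, shows both that the bound is tight and that these are the right target values.

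The hard part will be exactly this per-vertex turning bound, and it is here that the penny structure --- distances, not merely angles --- must be used. The purely combinatorial ``triangle-free planar'' hypothesis is insufficient: the cube graph $Q_3$ is biconnected, triangle-free, planar, and $3$-regular, hence has no degree-$2$ vertex at all, so it cannot be a penny graph. Moreover, the naive angle sum $\sum_v \alpha_v = (k-2)\pi$ combined with $\alpha_v > (d_v-1)\pi/3$ only yields $2n_2 + n_3 - n_5 > 6$, which does not force four degree-$2$ vertices. The rays are meant to convert the \emph{metric} constraints at each boundary vertex into the sharper statement that interior edges and reflex turns ``consume'' enough of the available rotation that only degree-$2$ vertices can advance the ray, and by at most $\pi/2$.

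I expect the real case analysis to concern the behavior of $R_v$ at reflex boundary vertices and at vertices of degree $4$ and $5$, where one must verify --- using that all neighbors sit at distance exactly $2$ with pairwise distances strictly greater than $2$ --- that the ray cannot turn forward. Handling these cases cleanly, rather than the easy degree-$2$ case, is where the work lies; once they are in place, the winding count closes the argument immediately.
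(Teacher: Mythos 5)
You are following the paper's own strategy --- the rays of \autoref{fig:rays}, the winding sum $\sum\tau_v=2\pi$, nonpositive turns at boundary vertices of degree at least three, bounded turns at degree-two vertices --- but the two per-vertex bounds that carry all of the content are, respectively, false and unproven, so there is a genuine gap. The degree-two bound $\tau_v\le\pi/2$, which you call the easy case and calibrate against the square grid, is wrong. Take four pennies centered at the vertices of a rhombus with side length $2$ and interior angles $\alpha$ and $\pi-\alpha$, where $\pi/3<\alpha<\pi/2$: both diagonals exceed $2$, so this is a biconnected triangle-free penny graph (a $4$-cycle) all of whose vertices have degree two, and two of the four turns in the cyclic sequence equal $\pi-\alpha$, which exceeds $\pi/2$ and can be made arbitrarily close to $2\pi/3$. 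What the geometry actually gives at a degree-two vertex is that the turn equals $\pi-\angle uvw$, where $u$ and $w$ are the two neighbors, and disjointness plus triangle-freeness force only $\angle uvw>\pi/3$; hence the correct bound is the strict inequality $\tau_v<2\pi/3$, not $\pi/2$. This weaker bound still finishes the proof, but by a different count from yours: since each positive turn is strictly less than $2\pi/3$, three of them cannot sum to $2\pi$, so there must be at least four positive turns, and positive turns occur only at degree-two vertices.

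The second, larger gap is that the bound $\tau_v\le 0$ at vertices of degree at least three --- which you correctly identify as the crux --- is never proved; you only announce that some case analysis over reflex vertices and degrees $4$ and $5$ should do it. For comparison, the paper's argument here is uniform in the degree and short: with $v,w$ consecutive clockwise on the outer face, $R_v$ pointing away from a neighbor $u$ of $v$ and $R_w$ pointing away from a neighbor $x$ of $w$, the degree assumption gives $x\ne v$ and triangle-freeness gives $x\ne u$; then either $x$ touches $u$, in which case $uvwx$ is a rhombus with all sides $2$ and the two rays are parallel (turn exactly $0$), or else $x$ must sit strictly counterclockwise around $w$ of the position where it would touch $u$, which makes the turn negative. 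Until you supply this step (or a worked-out substitute), the winding count concludes nothing: your own observation about $Q_3$ shows the lemma cannot follow from the bookkeeping alone, so the proof lives or dies with exactly the metric claims you have left open.
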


\begin{proof}
Given a biconnected triangle-free penny graph $G$, and its representation as a penny graph,
the outer face of the representation (as in any biconnected plane graph) consists of a simple cycle of vertices; in particular each vertex of this face has at least two neighbors. For each vertex $v$ of this simple cycle, let $w$ be the clockwise neighbor of $v$ in the cycle, and let $u$ be the neighbor of $v$ that is next in clockwise order around $v$ from $w$; define a ray $R_v$, having the center of the disk of $v$ as its apex, and pointing directly away from  the center of $u$ (\autoref{fig:rays}).
Given the same boundary vertices $v$ and $w$ in clockwise order, define the angle $\theta_w$ to be the angle made by rays $R_v$ and $R_w$, assigned a sign so that $\theta_w$ is positive if $R_w$ turns a clockwise angle (less than $\pi/2$) from $R_v$, and negative if $R_w$ turns counterclockwise with respect to $R_v$. If $R_v$ and $R_w$ are parallel, then we define $\theta_w=0$.
Then these rays and their angles have the following properties:
\begin{itemize}
\item Each ray $R_v$ points into the outer face of the drawing. Therefore, the sum of the turning angles of the rays as we traverse the entire outer face in clockwise order, $\sum\theta_v$, must equal $2\pi$.
\item If a boundary vertex $w$ has degree three or more, then $\theta_w\le 0$. For, if $v$ and $w$ are consecutive on the outer face, with $R_v$ pointing away from a neighbor $u$ of $v$ (as above) and $R_w$ pointing away from a neighbor $x$ of $w$, then the assumption that $w$ has degree at least three implies that $x\ne v$, and the assumption that $G$ is triangle-free implies that $x\ne u$.
If $x$ and $u$ touch, so that $uvwx$ forms a quadrilateral in $G$, then $R_v$ and $R_w$ are necessarily parallel, so $\theta_w=0$. In any other case, to prevent $x$ and $u$ from touching, $x$ must be rotated counterclockwise around $w$ from the position where it would touch $u$, causing angle $\theta_w$ to become negative.
\item At a boundary vertex $w$ of degree two, $\theta_w< 2\pi/3$. For, in this case, $R_w$ points away from $v$, the counterclockwise neighbor of $w$ on the outer face. Let $u$ be the neighbor of $v$ such that $R_v$ points away from $u$; then $w\ne u$. Because both $R_v$ and $R_w$ belong to lines through the center of $v$, their angle $\theta_w$ is complementary to angle $wvu$, which must be greater than $\pi/3$ in order to prevent circles $u$ and $w$ from overlapping or touching (and forming a triangle). Therefore, $\theta_w$ is less than $2\pi/3$.
\end{itemize}
For the sequence of angles $\theta_w$, each less than $2\pi/3$, to add to a total angle of $2\pi$, there must be at least four positive angles in the sequence, and therefore there must be at least four degree-two vertices.
\end{proof}

\begin{theorem}
\label{thm:many-d2}
Every triangle-free penny graph $G$ with at least one cycle has at least four non-articulation vertices of degree two or less.
\end{theorem}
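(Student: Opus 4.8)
The plan is to reduce the general statement to the biconnected case already settled by \autoref{lem:biconn-d2}, using the block-cut tree of $G$. First I would assume $G$ is connected: if it is not, the argument can be run on a connected component that contains a cycle, and any vertex whose removal does not disconnect its own component is automatically a non-articulation vertex of $G$. So let $T$ be the block-cut tree, whose nodes are the blocks and the cut (articulation) vertices of $G$, with an edge joining each cut vertex to every block containing it. I would classify each block as a \emph{bridge} (a single edge on two vertices) or a \emph{biconnected block} with at least three vertices; because $G$ contains a cycle, at least one biconnected block exists.

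The key observation is that a vertex that is not a cut vertex lies in exactly one block and has the same degree in $G$ as in that block, so its contribution is purely local. For each block $B$ I would count the non-articulation vertices of degree at most two that lie in $B$ and are not cut vertices. A bridge $B$ contains exactly $2-\deg_T(B)$ such vertices, each a pendant of degree one. A biconnected block $B$ contains at least four degree-two vertices by \autoref{lem:biconn-d2}, and at most $\deg_T(B)$ of them are cut vertices (the cut vertices of $B$ are precisely its $\deg_T(B)$ neighbors in $T$), so $B$ contributes at least $4-\deg_T(B)$ of the desired vertices, each of degree exactly two in $G$. Since distinct blocks have disjoint sets of non-cut vertices, the total number of non-articulation degree-$\le 2$ vertices is at least $\sum_{\text{biconnected }B}(4-\deg_T(B)) + \sum_{\text{bridge }B}(2-\deg_T(B))$.

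To finish I would evaluate this sum using the tree identity $\sum_B \deg_T(B) = |E(T)| = b + c - 1$, where $b$ and $c$ are the numbers of blocks and cut vertices and where every edge of $T$ is incident to exactly one block. Writing $b$ as the number $\beta$ of biconnected blocks plus the number $\epsilon$ of bridges, the bound simplifies to $3\beta + \epsilon - c + 1$. A second application of the same identity to the cut-vertex side, together with the fact that each cut vertex has degree at least two in $T$, gives $2c \le b + c - 1$, hence $c \le \beta + \epsilon - 1$; substituting this leaves a lower bound of $2\beta + 2 \ge 4$, since $\beta \ge 1$.

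I expect the main obstacle to be the bookkeeping rather than any single estimate, specifically making the count close at exactly four. Individually, a high-degree biconnected block may contribute nothing (its bound $4-\deg_T(B)$ can be negative), and a path-like block-cut tree has only two pendant leaves, so no purely local argument suffices; what rescues the global count is the trade-off $c \le b-1$, which forces every block-cut-tree edge that drives one block's contribution negative to be compensated by low-degree blocks elsewhere. The step needing the most care is verifying that the per-block bounds hold simultaneously and over disjoint vertex sets---in particular the ``at most $\deg_T(B)$ cut vertices'' estimate and the identification of non-cut vertices with non-articulation vertices of matching degree---after which the arithmetic built on $\sum_B \deg_T(B) = b + c - 1$ is routine.
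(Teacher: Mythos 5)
Your proof is correct, and while it rests on the same two pillars as the paper's---the block-cut tree and \autoref{lem:biconn-d2} applied to each nontrivial block (valid because blocks are induced subgraphs, hence themselves biconnected triangle-free penny graphs)---the way you aggregate is genuinely different. The paper argues locally: it fixes a single nontrivial biconnected component $C$, takes its four degree-two vertices, and for each one that happens to be an articulation point it chases the block tree away from $C$ to a leaf, which yields either a degree-one vertex or a nontrivial leaf block with at least three non-articulation degree-two vertices; distinctness of the four resulting vertices comes from the disjointness of the four subtrees. You instead run a global amortized count: per-block lower bounds $4-\deg_T(B)$ and $2-\deg_T(B)$ (valid even when negative, since each block's true contribution is nonnegative), combined with the tree identities $\sum_B \deg_T(B)=b+c-1$ and $\deg_T(v)\ge 2$ for cut vertices. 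Your route is longer but buys two things: it sidesteps the distinctness bookkeeping that the paper's leaf-chasing argument handles only implicitly, and it proves a quantitatively stronger statement---at least $2\beta+2$ non-articulation vertices of degree at most two, where $\beta$ is the number of nontrivial blocks---whereas the paper's argument stops at four. The paper's proof, in exchange, is shorter and needs no arithmetic at all. One small overstatement in your commentary: a ``purely local'' argument does suffice (that is what the paper does); what is true is that your particular per-block accounting cannot close without the global trade-off $c\le b-1$.
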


\begin{proof}
By the assumption that $G$ has at least one cycle, it has at least one nontrivial biconnected component $C$. By \autoref{lem:biconn-d2}, $C$ has at least four degree-two vertices, each of which either has degree two in $G$ or forms an articulation point of $G$. If it forms an articulation point, then the tree of biconnected components connected through it to $G$ has at least one leaf, which must either be a vertex of degree one in $G$ or a nontrivial biconnected component with at least four degree-two vertices, only one of which can be an articulation point. Thus, each of the four degree-two vertices in $C$ is either itself a non-articulation vertex of degree at most two in $G$ or leads to such a vertex.
\end{proof}

The bound on the number of degree-two vertices is tight for square grids.

\begin{theorem}
The degeneracy of every triangle-free penny graph is at most two.
\end{theorem}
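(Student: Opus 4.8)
The plan is to reduce the degeneracy bound directly to \autoref{thm:many-d2}. Recall that a graph has degeneracy at most two precisely when every subgraph contains a vertex of degree at most two; equivalently, one can repeatedly delete a vertex of degree at most two until no vertices remain, and the resulting elimination ordering certifies the bound. Since deleting edges only lowers degrees, it suffices to verify the condition for induced subgraphs, and the key observation is that every induced subgraph of a triangle-free penny graph is again a triangle-free penny graph: deleting the disks corresponding to the removed vertices leaves a valid representation by non-overlapping unit circles whose contact graph is the induced subgraph, and deleting vertices cannot create a triangle.

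Given this, I would argue as follows. Let $H$ be any induced subgraph of a triangle-free penny graph; by the observation above, $H$ is itself a triangle-free penny graph. If $H$ contains a cycle, then \autoref{thm:many-d2} supplies at least four vertices of degree at most two, so in particular $H$ has a vertex of degree at most two. If instead $H$ is acyclic, then $H$ is a forest and contains a vertex of degree at most one. Either way $H$ has a vertex of degree at most two, which is exactly what the degeneracy characterization requires, and peeling off such a vertex and iterating produces the desired ordering.

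There is essentially no remaining obstacle here, since all of the geometric work has already been carried out in \autoref{lem:biconn-d2} and \autoref{thm:many-d2}. The only points that need care are the closure of the class under taking induced subgraphs, which justifies applying \autoref{thm:many-d2} repeatedly along the elimination, and the separate but trivial treatment of the acyclic case, which is necessary because the hypothesis of \autoref{thm:many-d2} requires at least one cycle. I would phrase the final argument in the vertex-removal formulation to make transparent that a single vertex of degree at most two always exists and can be stripped off at each step.
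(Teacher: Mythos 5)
Your proof is correct and takes essentially the same approach as the paper: closure of triangle-free penny graphs under vertex deletion, \autoref{thm:many-d2} for the case with a cycle, and the existence of a leaf in the acyclic case. If anything, your restriction to \emph{induced} subgraphs (with the observation that this suffices for bounding degeneracy) is slightly more careful than the paper's blanket statement that every subgraph is again a triangle-free penny graph, which is immediate only for vertex deletions rather than edge deletions.
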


\begin{proof}
Every subgraph of a triangle-free penny graph is another triangle-free penny graph, so the result follows from \autoref{thm:many-d2} and from the fact that, in a graph with no cycles (a forest) there always exists a vertex of degree one or less (a leaf or an isolated vertex).
\end{proof}

\section{Choosability}

The \emph{choosability}, or \emph{list chromatic number}, of a graph $G$ is the minimum number $c$ such that, for every labeling of each vertex of $G$ by a list of $c$ colors,  there  exists an assignment of a single color from its list to each vertex, with no two adjacent vertices assigned the same color. The usual graph coloring problem is a special case in which all vertices have the same list. Known relations between list coloring and graph degeneracy~\cite{AloTar-Comb-92} give us the following result:

\begin{theorem}
If a triangle-free penny graph is labeled by a list of three colors for each vertex, then we can find a solution to the list coloring problem for the resulting labeled graph in linear time. The algorithm needs as input only the abstract graph, not its representation as a penny graph.
\end{theorem}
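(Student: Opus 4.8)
The plan is to reduce the statement to the classical fact that a graph of degeneracy $d$ is $(d+1)$-choosable, to make that reduction constructive, and to combine it with the degeneracy bound from the previous section. Since every triangle-free penny graph has degeneracy at most two, and since every subgraph of a triangle-free penny graph is again a triangle-free penny graph, every subgraph contains a vertex of degree at most two. This is exactly the structural property, a consequence of \autoref{thm:many-d2}, that lets us build a \emph{degeneracy ordering} of the vertices purely combinatorially, with no reference to the geometry: repeatedly delete from the graph a vertex whose current degree is at most two, recording the deletion order as $u_1,u_2,\dots,u_n$. The degeneracy bound guarantees that a vertex of degree at most two is always available, so the process never stalls, and each vertex $u_j$ has at most two neighbors among the later vertices $u_{j+1},\dots,u_n$.

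First I would compute this ordering in linear time. A naive implementation that rescans for a low-degree vertex at each step would be quadratic, so I would use the bucket-based smallest-last algorithm of Matula and Beck: keep the vertices in buckets indexed by their current degree, repeatedly extract a vertex from the lowest nonempty bucket (which here is always bucket $0$, $1$, or $2$), and, upon each deletion, decrement the recorded degrees of the surviving neighbors and shift them to the appropriate lower buckets. Each edge is touched a constant number of times and each vertex is extracted once, giving running time $O(n+m)$. Because a triangle-free penny graph is triangle-free and planar, it has at most $2n-4$ edges, so $m=O(n)$ and the ordering is produced in time linear in $n$.

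Next I would color greedily in the reverse of the deletion order, processing $u_n,u_{n-1},\dots,u_1$. When $u_j$ is processed, its already-colored neighbors lie among $u_{j+1},\dots,u_n$, that is, among the vertices still present at the moment $u_j$ was deleted, and by construction there are at most two of them. These at most two neighbors forbid at most two of the three colors in the list of $u_j$, so at least one admissible color remains and can be selected in constant time. This pass again costs $O(n+m)=O(n)$ time and yields a proper list coloring.

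Neither pass consults the penny-graph representation: the only geometric input is the degeneracy guarantee, which is an abstract structural consequence of \autoref{thm:many-d2}, so the algorithm needs only the input graph, as claimed. The main point requiring care is the linear-time bookkeeping of current degrees and bucket membership across the deletions; once that data structure is in place, correctness is immediate from the counting argument of the previous paragraph. The same argument shows more generally that degeneracy $d$ yields $(d+1)$-choosability in linear time, matching the relation between list coloring and degeneracy recorded in~\cite{AloTar-Comb-92}.
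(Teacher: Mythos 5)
Your proposal is correct and is essentially the paper's own argument: the paper also peels off a vertex of degree at most two (guaranteed by the degeneracy bound), colors the rest recursively, and restores the vertex with one of the at least one remaining colors, achieving linear time by maintaining current degrees, a list of low-degree vertices, and a stack of removals. Your iterative formulation with an explicit Matula--Beck degeneracy ordering and the $m\le 2n-4$ edge count is just a more detailed rendering of the same algorithm and data structures.
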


\begin{proof}
Find a vertex of degree at most two, remove it from the graph, color the remaining subgraph recursively, and put back the removed vertex. It has at most two neighbors, preventing it from being assigned at most two colors from its list of three colors, so there always remains at least one color available for it to use.

Linear time follows by maintaining the degree of each vertex in the reduced graph formed by the  removals, a list of vertices of reduced degree at most two, and a stack of removals to be reversed. It takes constant time per vertex removal and replacement to update these data structures.
\end{proof}

\begin{corollary}
Triangle-free penny graphs have choosability at most three.
\end{corollary}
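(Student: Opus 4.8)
The plan is to observe that this corollary is an immediate consequence of the preceding theorem, obtained simply by discarding its algorithmic content and matching its existence guarantee against the definition of choosability. Recall that the choosability of a graph $G$ is the least $c$ such that \emph{every} assignment of color lists of size $c$ to the vertices of $G$ admits a proper coloring in which each vertex receives a color from its own list. Thus, to prove that the choosability of a triangle-free penny graph $G$ is at most three, it suffices to exhibit, for an arbitrary labeling of the vertices of $G$ by lists of three colors, a proper list coloring.

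But the preceding theorem supplies exactly this: given any triangle-free penny graph together with a list of three colors at each vertex, it produces (indeed in linear time) a valid assignment of a single color from each list to its vertex such that adjacent vertices receive distinct colors. Since this holds for every such three-color labeling, the defining condition for choosability at most three is met, and the corollary follows. I expect no genuine obstacle here; the only point requiring care is that the definition of choosability quantifies universally over all list assignments of the prescribed size, so it is essential that the theorem treats an arbitrary labeling rather than one fixed family of lists --- which it does, since its algorithm takes the lists as part of its input and never assumes any structure on them beyond having size three.
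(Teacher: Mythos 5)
Your proposal is correct and matches the paper's (implicit) reasoning exactly: the corollary is stated without separate proof precisely because the preceding theorem already guarantees a valid coloring for \emph{every} assignment of three-color lists, which is the definition of choosability at most three. Your care in noting the universal quantification over list assignments is exactly the right point to check, and nothing more is needed.
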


This bound is tight as the odd cycles of length $\ge 5$ are triangle-free penny graphs with choosability exactly three.

\section{Edge count}

We derive a  bound on the number of edges of a triangle-free penny graph by using the isoperimetric theorem to show that the outer face of any representation as a penny graph has many vertices, and then by using Euler's formula to show that a planar graph with a large face has few edges.

\begin{lemma}
\label{lem:voronoi-area}
Let $v$ be a vertex of a penny graph that (in some representation of the graph as a penny graph) is not on the outer face. Then, in the Voronoi diagram of the centers of the circles in the representation, the Voronoi cell containing $v$ has area at least $2\sqrt{3}$, which is the area of a regular hexagon circumscribed around a unit circle.
\end{lemma}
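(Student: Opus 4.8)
The plan is to exploit two geometric constraints on the Voronoi cell $V$ of the center $c$ of $v$. Since $v$ is not on the outer face, $V$ is a bounded convex polygon, and because the disks are non-overlapping every other center lies at distance at least $2$ from $c$. First I would record the two facts that do the real work. Each edge of $V$ lies on the perpendicular bisector of $c$ and some other center at distance $\ge 2$, hence at distance at least $1$ from $c$; so $V$ contains the unit disk around $c$. Each vertex $p$ of $V$ is equidistant from $c$ and two further centers $c',c''$, i.e.\ it is the circumcenter of the triangle $cc'c''$, all of whose sides have length at least $2$. Applying the law of sines to the smallest angle of that triangle (which is at most $\pi/3$ and is opposite a side of length $\ge 2$) shows its circumradius is at least $2/\sqrt3$; thus every vertex of $V$ lies at distance at least $2/\sqrt3$ from $c$.

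These two bounds are simultaneously tight exactly for the regular hexagon circumscribed about the unit circle, whose inradius is $1$, circumradius is $2/\sqrt3$, and area is $2\sqrt3$, which is why that hexagon is the extremal shape. Note that containing the unit disk alone is not enough --- a convex polygon containing a unit disk can have area arbitrarily close to $\pi<2\sqrt3$ --- so the vertex bound is essential.

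To turn the two bounds into an area estimate I would cut $V$ into right triangles: drop the perpendicular from $c$ to each edge and also join $c$ to each vertex. Each resulting right triangle has its right angle at the foot of a perpendicular, one leg (the perpendicular) of length $h\ge 1$, and hypotenuse (to a vertex) of length $\rho = h/\cos\beta \ge 2/\sqrt3$, where $\beta$ is the angle of the triangle at $c$. Its area is $\tfrac12 h^2\tan\beta$, which for fixed $\beta$ is minimized by $h=\max(1,\tfrac{2}{\sqrt3}\cos\beta)$; this yields a sector-minimum $g(\beta)$ equal to $\tfrac13\sin 2\beta$ when $\beta\le\pi/6$ and to $\tfrac12\tan\beta$ when $\beta\ge\pi/6$. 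Since the angles $\beta$ of all the right triangles sum to $2\pi$, the total area is at least $\sum g(\beta)$.

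The crux is then the single-variable inequality $g(\beta)\ge \tfrac{\sqrt3}{\pi}\,\beta$ on $(0,\pi/2)$, with equality precisely at $\beta=\pi/6$. On $[0,\pi/6]$ the function $\tfrac13\sin2\beta$ is concave and agrees with the line at both endpoints, so it lies above it; on $[\pi/6,\pi/2)$ the function $\tfrac12\tan\beta$ is convex, agrees with the line at $\pi/6$, and has the larger slope there, so it lies above as well. Summing over all the right triangles then gives area $\ge \tfrac{\sqrt3}{\pi}\sum\beta = \tfrac{\sqrt3}{\pi}\cdot 2\pi = 2\sqrt3$. I expect this last linear lower bound to be the main obstacle: it is what lets the argument tolerate a Voronoi cell with any number of sides, and it is exactly where the vertex constraint (through the $\beta\le\pi/6$ branch) is needed to prevent the area from sagging toward $\pi$.
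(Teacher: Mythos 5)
Your approach is genuinely different from the paper's. The paper's own proof is an informal extremal argument: it asserts that the cell area is minimized when the Voronoi neighbors are tangent, equally spaced, and six in number, justifying this by a continuous-deformation claim. You instead derive explicit inradius and circumradius bounds for the cell (every edge at distance $\ge 1$ from $c$, every vertex at distance $\ge 2/\sqrt3$ from $c$, the latter via the law of sines applied to the circumradius of a triangle with all sides $\ge 2$) and convert them into an area bound through the per-sector estimate $g(\beta)\ge(\sqrt3/\pi)\beta$. Those computations are all correct --- the two branches of $g$, the concavity/convexity comparison against the line through the origin and $(\pi/6,\sqrt3/6)$, and the summation of angles to $2\pi$ --- and the result is considerably more rigorous than the paper's sketch, correctly identifying the circumscribed regular hexagon as the unique case of equality.

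There is, however, one genuine gap: your decomposition of $V$ into right triangles assumes that the foot of the perpendicular from $c$ to each edge lies on that edge. This can fail under your hypotheses. The foot of the perpendicular to the edge separating $c$ from $c'$ is the midpoint of $cc'$, and when $|cc'|>2$ that midpoint can be strictly closer to a third center than to $c$, hence outside the cell, even though the bisector of $cc'$ still contributes an edge lying off to one side. For instance, take centers $c=(0,0)$, $c'=(3,0)$, $c''=(1.5,-1.4)$ (all pairwise distances at least $2$), enclosed by a large ring of tangent pennies so that $c$ is not on the outer face: the cell of $c$ has an edge on the line $x=1.5$ beginning near $(1.5,0.10)$, while the foot $(1.5,0)$ lies in the cell of $c''$. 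In that situation the triangle joining $c$ to that edge is a \emph{difference}, not a union, of two right triangles, so your unsigned sum is not a decomposition of $V$ and the claim that every piece has a leg of length $h\ge 1$ does not describe what actually results. The gap is repairable: write the triangle over each edge as $\int_{\beta_1}^{\beta_2}\tfrac12 h^2\sec^2\beta\,d\beta$ in signed angular coordinates about the foot. If $0\in[\beta_1,\beta_2]$, the edge splits into two honest right triangles and your analysis of $g$ applies; if not, then the distance from $c$ to points of the edge is monotone along the edge, so it is minimized at an endpoint, which is a Voronoi vertex at distance $\ge 2/\sqrt3$, making the integrand at least $\tfrac12\cdot\tfrac43=\tfrac23>\sqrt3/\pi$, so the per-edge bound holds a fortiori. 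With that patch your argument is a complete and correct proof.
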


\begin{proof}
The area is minimized when each Voronoi neighbor of $v$ is as close as possible to $v$ (so that the neighbor's circle touches that of $v$, causing the Voronoi cell of $v$ to circumscribe its circle), when the neighbors are equally spaced around $v$ (forming a regular polygon), and when the number of neighbors is as large as possible (forming a hexagon). The first two of these claims follow from the fact that any other configuration of neighbors can be continuously deformed to make the area of $v$'s cell smaller, while the last one follows by comparing the areas of the other possible regular polygons.
\end{proof}

\begin{lemma}
\label{lem:penny-has-big-face}
In any penny graph representation of a graph $G$ with $n$ vertices,
the number of vertex-face incidences on the outer face of the representation is at least
\[
\sqrt{\pi\cdot 2\sqrt{3}\cdot n} - O(1)\approx 3.3\sqrt{n}.
\]
\end{lemma}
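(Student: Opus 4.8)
The plan is to prove the bound by an isoperimetric argument in which the interior disks force a large enclosed area while each outer-face incidence contributes only a bounded length to the enclosing perimeter. First I would form the closed polygonal curve $P$ obtained by walking around the boundary of the outer face and joining the centers of the circles of consecutive vertices of this boundary walk by straight segments, and let $Q$ be the bounded region that $P$ encloses. Since consecutive vertices of the outer-face walk are joined by an edge of $G$, their circles are tangent and their centers lie at distance exactly $2$; hence every segment of $P$ has length $2$. As the number of segments in a closed walk equals its number of vertex-face incidences $I$, the perimeter of $P$ is exactly $2I$.

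Next I would bound $\operatorname{area}(Q)$ from below. The Voronoi cells $V_v$ of the centers tile the plane, so $\operatorname{area}(Q)=\sum_v \operatorname{area}(V_v\cap Q)\ge \sum_{v\ \text{interior}}\operatorname{area}(V_v\cap Q)$, where the sum is over the $n-O(I)$ vertices not on the outer face. Each such vertex is an interior vertex, so \autoref{lem:voronoi-area} gives $\operatorname{area}(V_v)\ge 2\sqrt3$. A cell can fail to be counted in full only where it protrudes past $P$ into the exterior of $Q$, and this can happen only across boundary segments of $P$. I would argue that the protrusion across a single boundary segment $ab$ has area $O(1)$: the endpoints $a$ and $b$ lie at distance $2$ and ``guard'' that segment, so points just outside $Q$ near the segment are closer to $a$ or $b$ than to any interior generator on the near side. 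Summing this $O(1)$ loss over the $I$ segments yields $\operatorname{area}(Q)\ge 2\sqrt3\,n-O(I)$.

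Finally I would combine the two estimates through the isoperimetric inequality $\operatorname{perimeter}^2\ge 4\pi\operatorname{area}$, which holds for an arbitrary bounded planar region and so applies to $Q$ even when the outer-face walk repeats vertices or traverses bridges and is not a simple cycle. Substituting $\operatorname{perimeter}(P)=2I$ and the area bound gives $4I^2\ge 4\pi\bigl(2\sqrt3\,n-O(I)\bigr)$, that is, $I^2+O(I)\ge 2\sqrt3\,\pi n$. Solving this quadratic inequality for $I$ absorbs the linear correction into an additive constant and produces $I\ge\sqrt{\pi\cdot 2\sqrt3\cdot n}-O(1)$, which is the claimed bound.

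I expect the main obstacle to be the area accounting of the middle step: justifying that the interior Voronoi cells fill $Q$ up to an $O(I)=O(\sqrt n)$ error. This requires controlling how far an interior cell can cross the polygon $P$ and verifying that only the $O(1)$-area pockets adjacent to boundary segments are lost, independently of how large the cells may grow inward. Once this is in hand, the first and third steps are essentially bookkeeping, since the length-per-incidence count and the isoperimetric inequality are both exact.
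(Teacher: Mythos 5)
Your proposal is correct and takes essentially the same route as the paper's proof: the Voronoi cells of the interior vertices (via \autoref{lem:voronoi-area}) force an enclosed area of at least $2\sqrt{3}\,n-O(\sqrt n)$, each vertex-face incidence contributes exactly $2$ units of length to the enclosing curve, and the isoperimetric inequality finishes the argument. The only differences are bookkeeping: the paper asserts outright that every interior vertex's Voronoi cell lies entirely inside the outer face (your ``guarding'' argument, pushed to its conclusion, shows the protrusion you budget $O(1)$ per segment for is in fact zero), and the paper phrases the final step as a contradiction (``unless there are at least this many incidences\dots'') rather than solving the quadratic inequality directly as you do.
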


\begin{proof}
Unless there are at least this many incidences, by \autoref{lem:voronoi-area} there must be a total area of at least $2\sqrt{3}\cdot n - O(\sqrt n)$ enclosed by the outer face, because each Voronoi cell of an inner vertex is enclosed and the Voronoi cells are all disjoint. The result follows from the facts that each vertex-face incidence accounts for $2$ units of length of the outer face (the two radii of a single unit circle in the representation, along which the outer face enters and then leaves that circle) and that any curve that encloses area $A$ must have length at least $2\sqrt{\pi A}$ (the isoperimetric theorem, with the shortest enclosing curve being a circle).
\end{proof}

\begin{lemma}
\label{lem:big-face-few-edges}
Let $G$ be an $n$-vertex triangle-free plane graph in which one face has $k$ vertex-face incidences. Then $G$ has at most $2n-k/2-2$ edges.
\end{lemma}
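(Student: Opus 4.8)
The plan is to prove this by combining Euler's formula with the face-length handshake, using triangle-freeness to guarantee that every face other than the distinguished one is large. I will write $m$ for the number of edges, $f$ for the number of faces, and $\ell(F)$ for the length of a face $F$, meaning the number of edge-sides encountered along its boundary walk. The first observation is that, because the boundary of a face is a closed walk, it meets equally many vertices and edges (counted with multiplicity); hence the number of vertex-face incidences of the distinguished face equals its length, and that length is $k$.

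Next I would record the two global identities I need. The face-length handshake, $\sum_F \ell(F) = 2m$, holds because each edge contributes exactly two edge-sides, one to the face on each of its two sides. Euler's formula for a plane graph with $c$ connected components reads $n - m + f = 1 + c$; since $c \ge 1$, any bound derived using $c = 1$ can only improve for larger $c$, so I will carry $c$ through and note at the end that it helps rather than hurts.

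The crux is the inequality $\ell(F) \ge 4$ for every face $F$ other than the distinguished one. In a simple triangle-free plane graph a \emph{bounded} face can never have length two or three: length two would require two parallel edges enclosing a region, and length three would require a triangle (a single edge encloses no region, and loops are excluded by simplicity). Thus every bounded face has length at least four, and the only face that can be shorter is the outer face of a graph consisting of a single edge --- a degenerate situation with just one face, where the claimed bound is immediate. Granting $\ell(F) \ge 4$ for the $f-1$ non-distinguished faces, I obtain $2m = \sum_F \ell(F) \ge k + 4(f-1)$; substituting $f = 1 + c - n + m$ and simplifying gives $2m \ge k + 4(m - n + c)$, which rearranges to $m \le 2n - 2c - k/2 \le 2n - k/2 - 2$.

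The step I expect to be the main obstacle is exactly the lower bound $\ell(F) \ge 4$ on the non-distinguished faces, because it is where the degenerate structure of the graph --- bridges, pendant edges, and isolated-edge components nested inside larger faces --- must be controlled. The key point to verify carefully is that such features never manufacture a \emph{bounded} face of length less than four: a bridge or pendant edge is simply traversed twice by whatever face surrounds it, adding two to that face's length rather than creating a new short face, and a single edge can never by itself enclose a bounded region. Once this is pinned down, the counting above applies uniformly, and the disconnected and tree-like cases are absorbed without extra work.
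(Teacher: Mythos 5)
Your proof is correct and follows essentially the same route as the paper: identify vertex-face incidences with face length, double-count face lengths ($2m=\sum_F\ell(F)\ge k+4(f-1)$ via triangle-freeness), and substitute into Euler's formula. The only difference is that you carry the number of connected components and the degenerate cases (bridges, single-edge graphs) explicitly, which the paper's proof glosses over but which only strengthens the bound.
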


\begin{proof}
Vertex-face incidences and edge-face incidences on any face are equal, so the same face of $G$ that has $k$ vertex-face incidences also has $k$ edge-face incidences. We count the number of edge-face incidences in $G$ in two ways: by counting two incidences for each edge,
and by summing the lengths of the faces. Each face of $G$ has at least four edges, so if there are $e$ edges and $f$ faces then we have the inequality $2e\ge 4(f-1)+k$,
or equivalently
$e/2 - k/4 + 1\ge f$.
Using this inequality to replace $f$ in Euler's formula $n-e+f=2$, we obtain
$n-e+e/2-k/4+1\ge 2$,
or equivalently
$e\le 2n-k/2-2$
as claimed.
\end{proof}

\begin{theorem}
The number of edges in any $n$-vertex triangle-free penny graph is at most
\[
2n-\frac{1}{2}\sqrt{\pi\cdot 2\sqrt{3}\cdot n} + O(1)\approx 2n-1.65\sqrt{n}.
\]
\end{theorem}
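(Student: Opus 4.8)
The final theorem follows almost immediately by chaining together the three lemmas just established, so the plan is essentially to compose them with the right bookkeeping on the error terms.

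First I would fix a triangle-free penny graph $G$ on $n$ vertices together with a specific representation of it as a penny graph. The representation is what lets us talk about an outer face with a definite number of vertex-face incidences; the abstract graph alone does not. Let $k$ denote the number of vertex-face incidences on the outer face of this representation. By \autoref{lem:penny-has-big-face}, which uses the Voronoi-area lower bound of \autoref{lem:voronoi-area} together with the isoperimetric inequality, we have the bound $k \ge \sqrt{\pi\cdot 2\sqrt{3}\cdot n} - O(1)$.

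Next I would feed this large face into \autoref{lem:big-face-few-edges}. Since $G$ is a triangle-free plane graph (being a penny graph representation, it is plane, and being triangle-free is the hypothesis) with a face of $k$ vertex-face incidences, that lemma gives the edge bound $e \le 2n - k/2 - 2$. The key point is that the edge count of the abstract graph $G$ equals the edge count of its representation, so the bound obtained from the representation is a bound on the quantity we actually care about.

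Finally I would substitute the lower bound on $k$ into this edge bound. Writing $e \le 2n - k/2 - 2$ and using $k \ge \sqrt{\pi\cdot 2\sqrt{3}\cdot n} - O(1)$, the $-2$ and the $-O(1)/2$ terms get absorbed into a single $O(1)$, yielding
\[
e \le 2n - \tfrac{1}{2}\sqrt{\pi\cdot 2\sqrt{3}\cdot n} + O(1),
\]
which is exactly the claimed bound, with the numerical estimate $\tfrac{1}{2}\sqrt{\pi\cdot 2\sqrt{3}} \approx 1.65$ giving the stated $2n - 1.65\sqrt{n}$ form. I do not anticipate a genuine obstacle here, since all the substantive work lives in the three preceding lemmas; the only thing to be careful about is the direction of the inequalities (a larger face forces fewer edges, and we have a lower bound on the face size, so these combine correctly) and making sure the additive error terms are collected into a clean $O(1)$ rather than being mishandled in sign.
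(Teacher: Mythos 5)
Your proposal is correct and follows exactly the paper's own proof: take the large outer face guaranteed by \autoref{lem:penny-has-big-face} and substitute its size as $k$ in \autoref{lem:big-face-few-edges}. The extra care you describe about inequality directions and collecting the additive constants into a single $O(1)$ is sound but routine; there is no gap.
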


\begin{proof}
\autoref{lem:penny-has-big-face} proves the existence of a large face, and plugging the size of this face as the variable~$k$ in \autoref{lem:big-face-few-edges} gives the stated bound.
\end{proof}

We leave the problem of closing the gap between this upper bound and Swanepoel's $2n-2\sqrt{n}$ lower bound as open for future research.

\section{Diameter}

Our results on degeneracy and number of edges can be connected via the following two results, which provide an alternative proof that the number of edges in a triangle-free penny graph is $2-\Omega(\sqrt{n})$.

\begin{theorem}
\label{thm:high-diam}
Every connected $n$-vertex penny graph has diameter $\Omega(\sqrt{n})$.
\end{theorem}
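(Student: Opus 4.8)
The plan is to work directly with the penny-graph representation and to exploit the fact that each edge corresponds to a pair of tangent unit circles, whose centers lie at Euclidean distance exactly $2$. First I would fix a representation of the given connected penny graph, let $p_w$ denote the center of the circle representing each vertex $w$, and recall that these centers form a point set whose pairwise Euclidean distances are all at least $2$, since the circles have pairwise-disjoint interiors.

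The key geometric step is to relate graph distance to Euclidean distance. Along any path in $G$ of length $d$, each edge moves from one center to the center of an adjacent vertex, a Euclidean distance of exactly $2$; hence the two endpoints of the path have centers at Euclidean distance at most $2d$. In particular, if $G$ has diameter $D$ and $v$ is any fixed vertex, then every center $p_w$ lies within Euclidean distance $2D$ of $p_v$, so all $n$ centers are contained in a single disk of radius $2D$ centered at $p_v$.

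Next I would invoke a packing argument. The $n$ unit circles are pairwise interior-disjoint, and each is centered at a point lying inside the radius-$2D$ disk, so all $n$ of them are contained in the concentric disk of radius $2D+1$. Comparing total areas gives $n\cdot\pi \le \pi(2D+1)^2$, whence $n \le (2D+1)^2$ and therefore $D \ge (\sqrt n - 1)/2 = \Omega(\sqrt n)$, as claimed.

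I do not expect a serious obstacle here: the result follows from a clean combination of the unit-distance-per-edge property with a disjoint-disk area bound. The one point requiring a little care is the inequality relating the two notions of distance, where one must use that consecutive centers along a graph path are tangent and hence \emph{exactly} distance $2$ apart, rather than merely at distance at least $2$. A sharper constant could be extracted by replacing the crude area comparison with the optimal plane packing density $\pi/(2\sqrt3)$, but the weaker estimate already suffices to establish the $\Omega(\sqrt n)$ bound.
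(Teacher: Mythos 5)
Your proof is correct, and it rests on the same two facts as the paper's: tangent unit circles have centers at Euclidean distance exactly $2$, so a path of graph length $d$ displaces a center by at most $2d$; and interior-disjoint unit disks occupy area $\pi$ each. The difference lies in how the area argument is run. The paper argues forward: the convex hull of the $n$ disks encloses area $\Omega(n)$, so by the isodiametric (Bieberbach) inequality its \emph{geometric} diameter is $\Omega(\sqrt n)$, and two disks that far apart geometrically must be $\Omega(\sqrt n)$ apart in the graph. You argue in the contrapositive direction: if the graph diameter is $D$, then every center lies within $2D$ of a fixed center, so all $n$ disks fit inside a disk of radius $2D+1$, and comparing areas gives $n\le(2D+1)^2$, hence $D\ge(\sqrt n-1)/2$. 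Your route is more elementary --- it needs only the area of a disk, not the isodiametric inequality, and it makes the dependence on connectivity explicit --- at the cost of roughly a factor of two in the constant, since you bound the radius of an enclosing disk centered at one vertex where the paper bounds the diameter of the hull. Both give $D=\Omega(\sqrt n)$, and as you note, either constant could be sharpened further via the hexagonal packing density $\pi/(2\sqrt 3)$.
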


\begin{proof}
By a standard isodiametric inequality~\cite{Bie-JDMV-15}, for the convex hull of $n$ disjoint unit disks to enclose area $2\pi n$, it must have (geometric) diameter $\Omega(\sqrt{n})$. In order to connect two unit disks at geometric distance $\Omega(\sqrt n)$ from each other, they must also be at graph-theoretic distance $\Omega(\sqrt n)$.
\end{proof}

\begin{theorem}
\label{thm:diam-edges}
Every connected $n$-vertex triangle-free penny graph $G$ with diameter $D$ has at most $2n-D-2$ edges.
\end{theorem}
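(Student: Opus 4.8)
The plan is to combine \autoref{lem:big-face-few-edges} with a geometric argument that the unbounded (outer) face of the penny graph representation has many incidences. Recall that \autoref{lem:big-face-few-edges} gives $e\le 2n-k/2-2$ whenever some face has $k$ vertex-face incidences; taking that face to be the outer face, it therefore suffices to show the outer face has at least $2D$ vertex-face incidences, since then $e\le 2n-(2D)/2-2=2n-D-2$. I will phrase the target as a bound on the diameter: writing $b$ for the number of vertex-face incidences on the outer face, which equals the length of the closed boundary walk $W$ of that face, I want to prove $D\le b/2$.

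First I would dispatch the routine combinatorial half. If two vertices $u$ and $v$ both occur on the boundary walk $W$, then the two arcs of $W$ joining an occurrence of $u$ to an occurrence of $v$ have lengths summing to $b$, so the shorter arc is a walk of length at most $b/2$ from $u$ to $v$; hence $d(u,v)\le b/2$. This remains valid when $G$ is not biconnected, since articulation vertices simply appear on $W$ more than once. Consequently, to obtain $D\le b/2$ it is enough to exhibit a \emph{diametral pair} --- two vertices at graph distance exactly $D$ --- with both endpoints lying on the outer face.

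The heart of the proof, and the step I expect to be the main obstacle, is the geometric claim that some diametral pair lies on the outer face. Here the penny structure is essential: the purely combinatorial statement is false, as shown by thin annular quadrangulations such as cylindrical grids $C_m\times P_k$ with $k$ large and $m$ small, which are triangle-free and planar and have large diameter but a short outer boundary --- and which are \emph{not} penny graphs precisely because the disks in the outer layers would be spread too far apart to be mutually tangent. To establish the claim I would take an arbitrary diametral pair $s,t$ and ``push'' each endpoint outward: starting from $s$, follow a geometric ray from the center of $s$ directed away from the center of $t$, and replace $s$ by a disk reached on the outer face, arguing from the unit-disk geometry that displacing an endpoint outward, away from the other endpoint, cannot decrease the graph distance between them; the symmetric move applied to $t$ then produces a pair on the outer face at distance at least $D$, hence a diametral pair. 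Making this monotonicity precise --- ruling out that a shortcut through the interior shortens the distance as an endpoint migrates to the boundary --- is where the genuine geometric content lies and is the part that most needs care.

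Finally, combining the two halves gives $b\ge 2D$, and substituting $k=b$ into \autoref{lem:big-face-few-edges} yields $e\le 2n-b/2-2\le 2n-D-2$, as desired. As sanity checks, a path attains equality with $b=2(n-1)=2D$ (handling the acyclic case through the boundary-walk formulation), and for square grids the outer boundary has exactly $2D$ incidences, so both the final bound and the intermediate inequality $b\ge 2D$ are tight.
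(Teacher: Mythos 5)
Your reduction to \autoref{lem:big-face-few-edges} is sound as far as it goes: if the outer face had at least $2D$ vertex-face incidences, then $e\le 2n-D-2$ would follow at once, and your combinatorial observation that two vertices on the boundary walk are at distance at most $b/2$ is correct. But the geometric claim you yourself flag as the crux --- that some diametral pair lies on the outer face, equivalently $b\ge 2D$ --- is \emph{false} for triangle-free penny graphs, so the proof cannot be completed along these lines. Counterexample: in the square lattice (which is a penny configuration, so every induced subgraph of it is a triangle-free penny graph), take two nested rectangular rings, the inner one the boundary of a $3\times 3$ block of lattice points, the outer one the boundary of a rectangle that leaves a gap of two lattice units above, below, and to the right of the inner ring, but only one unit to its left, so that the rings are joined by a ladder of rungs along the left side only. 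Any path from the right side of the inner ring to the right side of the outer ring must travel around the inner ring to a rung and then back around the outer ring, so $D$ is roughly half the perimeter of the inner ring plus half the perimeter of the outer ring, while the outer face is just the outer ring; hence $b<2D$, and no diametral pair has both ends on the outer face. Iterating --- many nested rings, consecutive rings joined by a one-sided ladder on alternating left/right sides --- gives triangle-free penny graphs with $D=\Theta(n)$ but only $\Theta(\sqrt n)$ outer-face incidences, so $b$ and $2D$ are not even within a constant factor of each other. This also pinpoints why your proposed ``push the endpoint outward'' monotonicity must fail: in these examples, moving the inner endpoint of a diametral pair toward the boundary strictly decreases its distance to the other endpoint. (Your intuition that annular grids $C_m\times P_k$ are excluded by the penny condition is correct, but one-sided ladders let a penny graph imitate them.)

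The paper's proof avoids outer-face geometry entirely and is a short induction on $n$ using \autoref{thm:many-d2}: fix a diameter pair $u,w$; among the at least four non-articulation vertices of degree at most two guaranteed by that theorem, choose some $v\notin\{u,w\}$; then $G-v$ is connected, has $n-1$ vertices and at most two fewer edges, and its diameter is still at least $D$ because deleting a vertex cannot shorten the $u$--$w$ distance; the induction hypothesis then gives $e\le 2(n-1)-D-2+2=2n-D-2$. Note that this argument uses the penny structure only through \autoref{thm:many-d2}, which is why (as the appendix observes) it transfers verbatim to squaregraphs --- a class for which the outer-face inequality you were aiming at fails even more dramatically, since squaregraphs of bounded outer face size can be arbitrarily large.
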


\begin{proof}
We use induction on $n$.
If $G$ has no cycle, it is a tree, with $n-1$ edges, and the result follows from the fact that $D\le n-1$. Otherwise, let $uw$ be a diameter pair, and let $v$ be any vertex of degree at most two, whose removal does not disconnect $G$, distinct from $u$ and $w$. The existence of $v$ follows from \autoref{thm:many-d2}. Then $G-v$ has one less vertex, one or two fewer edges, and diameter at least $D$. The result follows by applying the induction hypothesis to $G-v$.
\end{proof}

It is not true more generally that 2-degenerate triangle-free planar graphs with diameter $D$ have at most $2n-D-2$ edges; \autoref{thm:diam-edges} relies on the specific properties of triangle-free penny graphs. However, we can prove analogous bounds of $2n-\Omega(\sqrt n)$ and $2n-D-2$ on the numbers of edges in \emph{squaregraphs}~\cite{BanCheEpp-SJDM-10}, plane graphs in which every bounded face is a quadilateral and every vertex that does not belong to the unbounded face has degree at least four. The details are given in
\iffull
the appendix.
\else
the full version~\cite{XXX}
\fi

\iffull
\bibliographystyle{unabuser}
\else
\bibliographystyle{splncs}
\fi
\bibliography{penny}

\iffull
\newpage
\appendix
\section{Analogous results for squaregraphs}

\begin{figure}[t]
\centering\includegraphics[width=3in]{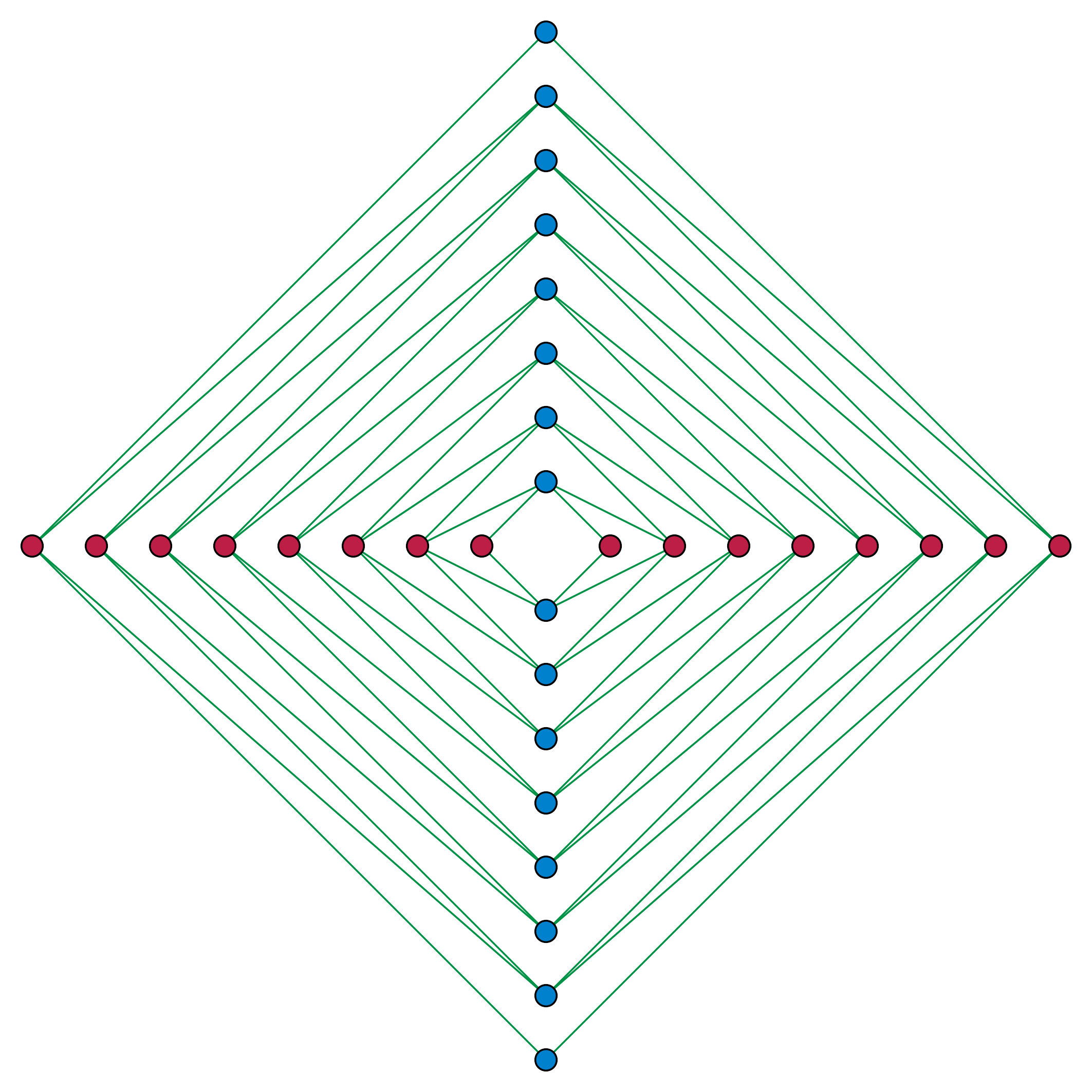}
\caption{One of a family of 2-degenerate bipartite planar graphs with $2n-4$ edges.}
\label{fig:malyshev32}
\end{figure}

We have shown that triangle-free penny graphs are 2-degenerate, have at most $2n-\Omega(\sqrt n)$ edges, and have at most $2n-D-2$ edges where $D$ is the graph-theoretic diameter of the graph.
These results on numbers of edges do not generalize to 2-degenerate triangle-free planar graphs more generally; \autoref{fig:malyshev32} shows the construction for a family of $2$-degenerate triangle-free planar graphs (actually bipartite planar permutation graphs) with unbounded diameter
and $2n-4$ edges, the maximum possible for any triangle-free planar graph.

However, as we now show, analogous bounds on edge number do apply to another class of triangle-free planar graphs, the \emph{squaregraphs}. These are the plane graphs in which every bounded face is a quadilateral and every vertex that does not belong to the unbounded face has degree at least four~\cite{BanCheEpp-SJDM-10}. Not every triangle-free penny graph is a squaregraph, and not every square-graph is a triangle-free penny graph; nevertheless, these two classes of graphs behave similarly in many respects.

Importantly, the squaregraphs obey a lemma corresponding to \autoref{lem:biconn-d2}: every biconnected squaregraph has at least four vertices of degree two~\cite[Proposition 4.1]{BanCheEpp-SJDM-10}. From this it immediately follows that they are 2-degenerate and 3-choosable. This also gives us the analogues of \autoref{thm:many-d2} (every squaregraph with at least one cycle has at least four non-articulation vertices of degree two) and \autoref{thm:diam-edges} (every connected squaregraph with diameter $D$ has at most $2n-D-2$ edges), as the proofs of these results for triangle-free penny graphs use only \autoref{lem:biconn-d2}. However, squaregraphs do not obey an analogue of \autoref{thm:high-diam}: arbitrarily large squaregraphs can have bounded diameter. Therefore, we cannot use the analogue of \autoref{thm:diam-edges} to obtain the claimed $2n-\Omega(\sqrt n)$ bound on the numbers of edges of squaregraphs.

Instead, we use the fact that squaregraphs are dual to hyperbolic line arrangements in which no three lines all cross each other~\cite[Theorem 6.1]{BanCheEpp-SJDM-10}.
In a hyperbolic arrangement with $\ell$ lines and $c$ crossings, the number of squaregraph vertices (dual to cells of the arrangement) is $c+\ell+1$ and the number of squaregraph edges (dual to the line segments between cells in the arrangement) is $2c+\ell$.
Therefore, to construct a squaregraph with the maximum number of edges for a given number $n=c+\ell+1$ of vertices,
we need to maximize $c$ and correspondingly minimize $\ell$.
However, because the intersection graph of the lines is triangle-free, it follows by Tur\'an's theorem that there will be at most $c\le\lfloor\ell/2\rfloor\cdot\lceil\ell/2\rceil$ crossings of lines.
Combining this inequality with the formulas for the numbers of edges and vertices in a squaregraph proves that the number of edges in any $n$-vertex squaregraph
is at most $\lfloor 2n-2\sqrt n\rfloor$.
This bound is tight, as it can be achieved for any $n$ by finding the smallest square grid with at least $n$ vertices and then removing degree-two vertices until the number of remaining vertices is~$n$.

\fi

\end{document}